\newtheorem{theorem}{Theorem}
\newtheorem{claim}[theorem]{Claim}
\theoremstyle{definition}
\numberwithin{figure}{section}
\newcommand{\eps}{\varepsilon}
\newcommand{\su}{\subseteq}
\newcommand{\sm}{\setminus}
\newcommand{\F}{\mathbb{F}}
\newcommand{\alphsymb}{\Sigma}
\title{List-decodability with large radius for Reed--Solomon codes}
\author{Asaf Ferber\thanks{Department of Mathematics, University of California, Irvine.
Email: \href{mailto:asaff@uci.edu} {\nolinkurl{asaff@uci.edu}}.
Research supported in part by NSF Awards DMS-1954395 and DMS-1953799.}
\and 
Matthew Kwan\thanks{Department of Mathematics, Stanford University, Stanford, CA.
Email: \href{mattkwan@stanford.edu}{\nolinkurl{mattkwan@stanford.edu}}.
Research supported by NSF Award DMS-1953990.}
\and
Lisa Sauermann\thanks{School of Mathematics, Institute for Advanced Study, Princeton, NJ. Email: \href{lsauerma@mit.edu}{\nolinkurl{lsauerma@mit.edu}}. Research supported by NSF Grant CCF-1900460 and NSF Award DMS-2100157.}
}
\begin{document}

\maketitle

\begin{abstract}
List-decodability of Reed--Solomon codes has received a lot of attention, but the best-possible dependence between the parameters is still not well-understood. In this work, we focus on the case where the list-decoding radius is of the form $r=1-\varepsilon$ for $\varepsilon$ tending to zero. Our main result states that there exist Reed--Solomon codes with rate $\Omega(\varepsilon)$ which are $(1-\varepsilon, O(1/\varepsilon))$-list-decodable, meaning that any Hamming ball of radius $1-\varepsilon$ contains at most $O(1/\varepsilon)$ codewords. This trade-off between rate and list-decoding radius is best-possible for any code with list size less than exponential in the block length.

By achieving this trade-off between rate and list-decoding radius we improve a recent result of Guo, Li, Shangguan, Tamo, and Wootters, and resolve the main motivating question of their work. Moreover, while their result requires the field to be exponentially large in the block length, we only need the field size to be polynomially large (and in fact, almost-linear suffices). We deduce our main result from a more general theorem, in which we prove good list-decodability properties of random puncturings of any given code with very large distance.
\end{abstract}

\section{Introduction}

Reed--Solomon codes are a family of error-correcting codes that have been studied intensively in many different contexts since they were introduced in \cite{reed-solomon}. As the parameters of the code, consider a prime power $q$ and integers $1\leq k<n\leq q$. Then, for $n$ distinct \emph{evaluation points} $\alpha_1,\dots,\alpha_n\in \F_q$, the $[n,k]$-Reed--Solomon code with these evaluation points is defined to be the set of \emph{codewords}
\[\mathcal{C}^{(k)}_{\alpha_1,\dots,\alpha_n}:=\{(f(\alpha_1),\ldots,f(\alpha_n)) \mid f\in \F_q[x],\, \deg f<k\}\su \F_q^n.\]

One reason for the great interest in Reed--Solomon codes is that they behave optimally with respect to the classical \emph{unique decoding} problem, having an optimal trade-off between rate and distance.
For an alphabet $\alphsymb$ of size $|\alphsymb|=q$, and a code $\mathcal C\subseteq \alphsymb^n$, the \emph{rate} of $\mathcal C$ is defined to be $\log_q |\mathcal C|/n$, and the \emph{distance} of $\mathcal C$ is defined to be the minimum Hamming distance between a pair of distinct codewords $\gamma,\gamma'\in \mathcal C$ (recall that the Hamming distance between $\gamma$ and $\gamma'$ is the number of positions in which $\gamma$ and $\gamma'$ disagree). Every $[n,k]$-Reed--Solomon code has rate $k/n$ and distance $n-k+1$. By the Singleton bound \cite{singleton}, this is the highest possible rate for any code of this distance. In addition, due to their simple structure, Reed--Solomon codes allow for efficient algorithms\footnote{In this paper, we are not concerned with algorithmic questions, and only study the combinatorial properties of Reed--Solomon codes.}.

An important generalization of the unique decoding problem is the problem of \emph{list-decoding}, and properties of Reed--Solomon codes with respect to list-decodability are much less understood. Roughly speaking, while the unique encoding problem demands that the original codeword can be uniquely reconstructed from a noisy signal, for the list-decoding problem we are satisfied with a short list of candidate codewords for a noisy signal. List-decodability was first introduced by Elias and Wozencraft \cite{elias, wozencraft} in the 1950s, and has since been used in several different areas of theoretical computer science. Regarding list-decodability of Reed--Solomon codes specifically, there are applications in complexity theory and the theory of pseudorandomness \cite{cai-et-al, lund-potukuchi, sudan-et-al}. The problem of understanding the (combinatorial) list-decodability of Reed--Solomon codes has been raised by many researchers over the last two decades (see for example \cite[p.\ 111]{guruswami-thesis}, \cite[p.\ 120]{rudra-thesis}, and \cite[Problem 5.20]{vadhan-lecture-notes}), and there has been a lot of recent work investigating this problem \cite{guo-et-al, rudra-wootters, shangguan-tamo}. Still, a lot of questions remain open.

In order to formally define what it means for a code to be list-decodable, we need to introduce some more definitions. Given $r\in (0,1)$, an alphabet $\alphsymb$, and $\beta\in \alphsymb^n$, the \emph{Hamming ball} of (relative) radius $r$ centered at $\beta$ is defined as
\[B_{r}(\beta):=\{\gamma\in \alphsymb^n \mid \gamma[i]=\beta[i] \textrm{ for at least }(1-r)n \textrm{ positions } 1\leq i\leq n\}\]
(here, by $\gamma[i]$ we denote the symbol in the $i$-th position of $\gamma\in\alphsymb^n$). In other words, this Hamming ball consists of all points $\gamma\in \alphsymb^n$ that differ in most $rn$ coordinates from $\beta$.

A code $\mathcal{C}\su \alphsymb^n$ is called \emph{$(r,L)$-list-decodable} (for some radius $r\in (0,1)$ and some \emph{list size} $L\in \mathbb{N}$) if we have $\vert \mathcal{C}\cap B_{r}(\beta)\vert\leq L$ for all $\beta\in \alphsymb^n$. In other words, $\mathcal{C}$ is $(r,L)$-list-decodable if each Hamming ball of (relative) radius $r$ in $\alphsymb^n$ contains at most $L$ codewords from $\mathcal{C}$. Note that for list size $L=1$, the setting of $(r,L)$-list-decodability precisely corresponds to the classical unique decoding setting. In this paper, we are primarily interested in list-decodability for Reed--Solomon codes.

For any radius $r\in (0,1)$ and any list size $L$, one can ask for the maximum possible rate of an $(r,L)$-list-decodable Reed--Solomon code. Shangguan and Tamo~\cite{shangguan-tamo} posed a precise conjecture for this general question, and made some partial progress towards their conjecture. Here, we focus on the case of radius $r=1-\eps$, for $\eps$ tending to zero. The main problem we investigate is how large the rate can be for a $(1-\eps,L)$-list-decodable $[n,k]$-Reed--Solomon code (for growing $n$), when the list size $L$ is not too large (say, not exponential in $n$).

There are some general results which immediately give upper and lower bounds for this problem. First, the \emph{list-decoding capacity theorem} (see for example \cite[Theorem 7.4.1]{guruswami-rudra-sudan-book}) implies that the rate of any  $(1-\eps,L)$-list-decodable code (where the list size $L$ is less than exponential in the block length $n$) can be at most\footnote{More precisely, as $n$ grows, the rate of such a code cannot be bounded above $\eps$.} $O(\eps)$. Second, the \emph{Johnson bound} \cite{johnson} gives a general bound on the list-decodability of a code in terms of its distance, and implies that \emph{every} $[n,k]$-Reed--Solomon code with rate $k/n=\eps^2$ is $(1-\eps,qn^2)$-list-decodable. In particular, there exist $(1-\eps,L)$-list-decodable Reed--Solomon codes that have rate $\eps^2$, where the list size $L$ is polynomial in $n$. Thus, the highest possible rate for the above problem lies somewhere between $\Omega(\eps^2)$ and $O(\eps)$.


Recently, Guo, Li, Shangguan, Tamo, and Wootters~\cite{guo-et-al} made major progress on closing this gap, improving the lower bound $\eps^2$ obtained from the Johnson bound\footnote{Rudra and Wootters \cite{rudra-wootters}, in an earlier work, also proved lower bounds that improve on the Johnson bound in certain regimes of $q$ and $\eps$. See also the comment further below.}. They proved that over very large fields there exist $(1-\eps,O(1/\eps))$-list-decodable Reed--Solomon codes with rate $\Omega(\eps/\log (1/\eps))$, matching the list-decoding capacity upper bound up to a logarithmic factor. They stated that their ``motivating question is whether or not RS codes can be list-decoded up to radius $1-\eps$ with rates $\Omega(\eps)$'', and this question remained open.

Our main result resolves this question in the affirmative, closing the gap to the list-decoding capacity upper bound (up to constant factors). We prove that over sufficiently large fields there exist $(1-\eps,O(1/\eps))$-list-decodable Reed--Solomon codes with rate $\Omega(\eps)$. This means that, up to constant factors, Reed--Solomon codes achieve the highest possible rate among all $(1-\eps,L)$-list-decodable codes where the list size $L$ is smaller than exponential in the block length $n$. A more precise statement of our main result is as follows.

\begin{theorem}\label{thm-main-simple}
Fix a constant $c\geq 5$. Let $\eps\in (0,1)$, let $n\in \mathbb{N}$ be sufficiently large with respect to $\eps$ and $c$, and let $q$ be a prime power with $q\geq n^{c/(c-1)}$. Then there exist $(1-\eps,\lceil 3/\eps\rceil)$-list-decodable $[n,k]$-Reed--Solomon codes over $\F_q$ with rate at least $\eps/(3c)$.
\end{theorem}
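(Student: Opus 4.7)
The plan is to deduce the theorem from a general result (alluded to in the abstract) showing that random puncturings of codes with very large minimum distance are list-decodable near capacity. We apply this to the ``complete'' Reed--Solomon code $\mathcal{C}^*$, in which every element of $\F_q$ is an evaluation point. This ambient code has length $N = q$, size $q^k$, and minimum distance $q - k + 1$, which is very close to $q$ under the hypotheses $q \geq n^{c/(c-1)}$ and $k \leq \eps n/(3c)$. Sampling $n$ evaluation points $\alpha_1, \ldots, \alpha_n$ uniformly at random from $\F_q$ gives a random puncturing, and the goal is to show that with positive probability the resulting code is $(1-\eps, \lceil 3/\eps \rceil)$-list-decodable.

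Set $L = \lceil 3/\eps \rceil$ and suppose that the puncturing admits a \emph{bad configuration}: distinct polynomials $f_0, \ldots, f_L$ of degree $<k$ together with a center $\beta \in \F_q^n$ such that $|T_i| \geq \eps n$ for each $i$, where $T_i := \{j : f_i(\alpha_j) = \beta_j\}$. Writing $a_j = \#\{i : j \in T_i\}$ gives $\sum_j a_j \geq (L+1)\eps n \geq 3n$, and convexity of $\binom{\cdot}{2}$ yields $\sum_{i<i'}|T_i \cap T_{i'}| = \sum_j \binom{a_j}{2} \geq n \binom{3}{2} = 3n$. An index $j \in T_i \cap T_{i'}$ forces $f_i(\alpha_j) = f_{i'}(\alpha_j)$, so $\alpha_j$ lies in the root set $R_{i,i'}$ of the nonzero polynomial $f_i - f_{i'}$, which has size at most $k - 1$. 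Thus a bad configuration implies $\sum_{i<i'}|S \cap R_{i,i'}| \geq 3n$, where $S = \{\alpha_1, \ldots, \alpha_n\}$. For any fixed tuple of polynomials this event is unlikely, since the expectation of the left-hand side over the random $S$ is at most $\binom{L+1}{2}(k-1) \cdot n/q$, which is $o(n)$ once $q \gg n$.

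The main obstacle is making the union bound over bad configurations work. The naive count of $(L+1)$-tuples of polynomials, namely $(q^k)^{L+1}$, is wastefully large and only recovers the Johnson-bound rate $\eps^2$ after standard concentration. To beat this, I would parameterize bad configurations more efficiently: after fixing $f_0$, once the agreement set $S \cap R_{0,i}$ is specified, the polynomial $f_i$ is forced to agree with $f_0$ at those evaluation points, leaving only $k - |S \cap R_{0,i}|$ free parameters instead of $k$. Summing this refined count over all possible agreement patterns, combined with a sharp hypergeometric tail bound for $\sum_{i<i'}|S \cap R_{i,i'}|$ (which is effectively an independent-Bernoulli-style estimate since $n/q$ is small), the total expected number of bad configurations drops below $1$ in the regime $q \geq n^{c/(c-1)}$ with $c \geq 5$. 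The requirement $c \geq 5$ is what balances the surviving polynomial count against the probability savings from the hypergeometric tail.
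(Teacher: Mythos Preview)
Your overall framework---viewing a random Reed--Solomon code as a random puncturing of the full $[q,k]$-code and bounding the expected number of bad list configurations---matches the paper exactly, and your convexity calculation $\sum_{i<i'}|T_i\cap T_{i'}|\ge 3n$ together with the observation that $j\in T_i\cap T_{i'}$ forces $\alpha_j$ to be a root of $f_i-f_{i'}$ are both on target. The gap is in the final step, where you try to beat the naive union bound.

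Your refined parameterization anchors everything on $f_0$: you save $|S\cap R_{0,i}|$ degrees of freedom when choosing each $f_i$, for a total exponent savings of $\sum_{i\ge 1}|S\cap R_{0,i}|$. But the convexity inequality controls $\sum_{i<i'}|T_i\cap T_{i'}|$ over \emph{all} pairs, not the star sum rooted at $0$. Nothing prevents $f_0$ from being nearly orthogonal to every other $f_i$ while $f_1,\dots,f_L$ overlap heavily with one another; in that case your savings are negligible even though the total pairwise overlap is $\ge 3n$. You also combine this count (which is for fixed $S$) with a hypergeometric tail bound for $\sum_{i<i'}|S\cap R_{i,i'}|$ (which is over the randomness of $S$); these two estimates live over different probability spaces and do not multiply. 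Pushed to its conclusion, the star/tree-style parameterization is essentially what Guo--Li--Shangguan--Tamo--Wootters do, and it is precisely what produces the extra $\log(1/\eps)$ factor you are trying to remove.

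The idea that actually closes the gap is different. Fix the agreement pattern $I_1,\dots,I_{L+1}\subseteq[n]$ first, and let $M\subseteq[n]$ be the set of indices covered by at least two of the $I_j$. The paper shows (by a short random-subset argument) that there is a set $Z\subseteq M$ with $|Z|\le |M|/(2c-2)$ such that $|Z\cap I_t|>h$ for every $t$. Revealing $(a_i,\beta[a_i])$ for $i\in Z$ then determines \emph{all} of $\gamma_1,\dots,\gamma_{L+1}$ simultaneously (since any two codewords agree in at most $h$ positions). Once the codewords are fixed, every remaining index $i\in M\setminus Z$ has at most $h$ choices for $a_i$, because $a_i$ must be a position where some specific pair $\gamma_j,\gamma_{j'}$ agree. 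The resulting count $m^{|Z|}q^{|Z|}h^{|M|-|Z|}m^{n-|M|}$, with $|Z|$ a small fraction of $|M|$ and $|M|\ge 2ch$, is what yields the $q^{-h/2}m^n$ bound needed; the final union bound over patterns $I_1,\dots,I_{L+1}$ costs only $2^{n(L+1)}$, which is absorbed. The point is that one small witness set pins down all $L+1$ codewords at once, rather than chaining them to a single anchor $f_0$.
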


As mentioned above, the rate $\eps/(3c)$ here is tight up to the constant factor $3c$. Furthermore, Theorem~\ref{thm-main-simple} also improves the above-mentioned result of \cite{guo-et-al} in terms of the required field size $q$: in \cite{guo-et-al}, the field size $q$ needs to be exponential in the block length $n$, whereas Theorem~\ref{thm-main-simple} only assumes the polynomial bound $q\geq n^{c/(c-1)}$. By choosing a large constant $c$, the exponent $c/(c-1)$ can be taken arbitrarily close to $1$. In this sense, we can take the field size to be almost-linear in the block length $n$.

Similarly to the approach in \cite{guo-et-al}, we actually show that one can obtain the desired Reed--Solomon codes in Theorem~\ref{thm-main-simple} via a random choice of the evaluation points $(\alpha_1,\dots,\alpha_n)$: For suitable parameters $L=O(1/\eps)$, $n$ and $k=\Omega(\eps n)$, and for sufficiently large $q$, we prove that for almost all choices of $(\alpha_1,\dots,\alpha_n)\in \F_q^n$ the $[n,k]$-Reed--Solomon code $\mathcal{C}^{(k)}_{\alpha_1,\dots,\alpha_n}$ is $(1-\eps,L)$-list-decodable (and has rate $k/n=\Omega(\eps)$).

\begin{theorem}\label{thm-main-list-dec-rate}
Fix a constant $c\geq 5$. Let $\eps\in (0,1)$, let $n\in \mathbb{N}$ be sufficiently large with respect to $\eps$ and $c$, and let $k=\lceil \eps n/(3c)\rceil$. Furthermore, let $q$ be a prime power with $q\geq n^{c/(c-1)}$. Then for a uniformly random choice of an $n$-tuple $(\alpha_1,\dots,\alpha_n)\in \F_q^n$ with distinct entries $\alpha_1,\dots,\alpha_n$, the Reed--Solomon code $\mathcal{C}^{(k)}_{\alpha_1,\dots,\alpha_n}$ with rate $k/n\geq \eps/(3c)$ is $(1-\eps,\lceil 3/\eps\rceil)$-list-decodable with probability at least $1-q^{-\eps n/(13c)}$.
\end{theorem}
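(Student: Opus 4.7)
The plan is to union bound over the ``failure witnesses'': $(L+1)$-tuples of distinct polynomials $P_0, \ldots, P_L \in \F_q[x]$ of degree $<k$ (where $L = \lceil 3/\eps\rceil$) together with a center $\beta \in \F_q^n$ satisfying $|A_j| := |\{i : P_j(\alpha_i) = \beta_i\}| \geq \eps n$ for every $j$. The first step is a convexity reduction. Setting $d_i := |\{j : P_j(\alpha_i) = \beta_i\}|$, we have $\sum_i d_i = \sum_j |A_j| \geq (L+1)\eps n \geq 3n$, so the average $\bar d := \sum_i d_i/n \geq 3$. Jensen's inequality applied to $x \mapsto \binom{x}{2}$ yields $\sum_i \binom{d_i}{2} \geq n\binom{\bar d}{2} \geq 3n$, which rewrites as $\sum_{j<j'}|A_j\cap A_{j'}| \geq 3n$. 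Using $|A_j \cap A_{j'}| \leq |\{i : P_j(\alpha_i) = P_{j'}(\alpha_i)\}|$ and averaging over the $\binom{L+1}{2}$ pairs, some pair $(j,j')$ must agree on $M := \lceil 3n/\binom{L+1}{2}\rceil = \Omega(\eps^2 n)$ positions. Equivalently, the nonzero polynomial $h := P_j - P_{j'}$ of degree $<k$ has at least $M$ roots among $\alpha_1, \ldots, \alpha_n$.

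The second step is to bound the probability of this ``collision'' event over the random distinct $\alpha_i$'s. Rather than sum per-polynomial tails, I would invert the order of summation and union bound over incident pairs $(h, S)$ with $h \neq 0$ of degree $<k$, $S \subseteq \F_q$, $|S| = M$, and $S \subseteq R_h$ (the zero set of $h$). For each fixed $S$ with $|S| = M \leq k-1$, the polynomials of degree $<k$ vanishing on $S$ form a coset of the ideal generated by $\prod_{s \in S}(x-s)$, hence have count $q^{k-M}$; and for random distinct $\alpha_1, \ldots, \alpha_n$ the probability $\Pr[S \subseteq \{\alpha_1, \ldots, \alpha_n\}]$ equals $(n)_M/(q)_M$. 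Summing $S$ over all $\binom{q}{M}$ $M$-subsets of $\F_q$ and using the identity $\binom{q}{M}(n)_M/(q)_M = \binom{n}{M}$ produces
\[\Pr[\text{failure}] \;\leq\; q^{k-M}\binom{n}{M}.\]
Plugging in $k \leq \eps n/(3c)+1$, $M = \Omega(\eps^2 n)$, and the hypothesis $q \geq n^{c/(c-1)}$, careful but routine estimation should yield the desired $q^{-\eps n/(13c)}$.

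The main obstacle is making this argument work uniformly in $\eps \in (0,1)$: the averaging in Step~1 only delivers $M \approx 2\eps^2 n/3$, which exceeds $k$ (causing the event to be impossible deterministically or yielding a very small $q^{k-M}\binom{n}{M}$) only when $\eps \gtrsim 1/c$. For smaller $\eps$, one has $q^{k-M} \gg 1$, and defeating the combinatorial factor $\binom{n}{M}$ to reach the target probability seems to require strengthening Step~1 by exploiting that $L \approx 3/\eps$ is large---for example, by forcing \emph{many} pairs of polynomials, rather than a single pair, to simultaneously exhibit substantial pairwise agreement (via a higher-moment Jensen-type bound on $\sum_i \binom{d_i}{r}$), and then union bounding over this much more restrictive structure so as to beat the implicit $q^{k(L+1)} \approx q^{n/c}$ factor that would arise from a naive count of $(L+1)$-tuples.
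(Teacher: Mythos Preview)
Your Step~1 reduction is correct but too lossy, and you have correctly identified the fatal issue: averaging over the $\binom{L+1}{2}$ pairs only produces $M = \Theta(\eps^2 n)$, whereas $k = \Theta(\eps n)$. So for small $\eps$ you have $M < k$, the factor $q^{k-M}$ is of order $q^{\Theta(\eps n)}$, and the bound $q^{k-M}\binom{n}{M}$ is vastly greater than $1$. Your proposed repair via higher-moment Jensen on $\sum_i \binom{d_i}{r}$ does not close this gap either: the same averaging over $\binom{L+1}{r}$ tuples again costs a factor $L^r \approx \eps^{-r}$, and you still end up comparing a constraint of size $\Theta(\eps^r n)$ against the $q^{k(L+1)} \approx q^{n/c}$ choices of polynomials. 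The fundamental problem is that reducing to \emph{any} single pair (or small tuple) of polynomials throws away almost all of the structure of the failure event.

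The paper avoids this by never enumerating the $(L+1)$-tuple of polynomials at all. It first conditions on the agreement pattern $(I_1,\dots,I_{L+1})$ (with $I_j = \{i : P_j(\alpha_i) = \beta_i\}$), at cost $2^{n(L+1)}$. For a fixed pattern, let $M$ be the set of indices covered by at least two $I_j$'s; after a short induction one may assume $|M \cap I_t| \ge 2ch$ for every $t$. The key new idea is a \emph{seed set}: a random subset $Z \subseteq M$ of density about $1/(2c)$ will satisfy $|Z \cap I_t| > h = k-1$ for every $t$. Specifying the values $(\alpha_i, \beta_{\alpha_i})$ for $i \in Z$ therefore \emph{uniquely determines} each $P_t$ (a degree-$<k$ polynomial is determined by $k$ evaluations), at cost only $(qm)^{|Z|}$ rather than $q^{k(L+1)}$. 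Once the $P_t$'s are fixed, every index $i \in M \setminus Z$ has at most $h$ choices for $\alpha_i$ (since $P_j - P_{j'}$ has fewer than $k$ roots), giving a gain of $(h/m)^{|M|-|Z|} \le q^{-(|M|-|Z|)/c}$. Because $|Z| \le |M|/(2c-2)$ while $|M| \ge 2ch$, this gain dominates and yields the $q^{-\Theta(h)}$ bound. The seed/decoding step is exactly the ingredient missing from your plan.
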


Note that Theorem~\ref{thm-main-list-dec-rate} immediately implies Theorem~\ref{thm-main-simple}. Our proof approach for Theorem~\ref{thm-main-list-dec-rate} is inspired by the approach of Guo, Li, Shangguan, Tamo, and Wootters in \cite{guo-et-al}, which in turn builds on the ideas in earlier work of Shangguan and Tamo \cite{shangguan-tamo}. However, our proof is significantly simpler and much shorter.

In fact, we deduce Theorem~\ref{thm-main-list-dec-rate} from a much more general result about random puncturings of arbitrary codes with very large distance. A \emph{puncturing} of a code $\mathcal C\subseteq \alphsymb^m$ to a set $S\subseteq [m]$ is defined to be the code $\mathcal C_S\subseteq\alphsymb^S$ whose codewords are obtained by restricting all the codewords in $\mathcal C$ to only the positions in $S$. Formally, $\mathcal C_S=\{(\gamma[i])_{i\in S}\mid \gamma\in \mathcal C\}$. We will consider random puncturings of a given code $\mathcal{C}$ obtained by choosing a uniformly random subset $S\subseteq [m]$ of a given size $n$ (then the puncturing $\mathcal{C}_S$ has block length $n$).

We prove the following general result concerning list-decodability of random puncturings of a given code with large distance. Roughly speaking, this result states that for a code with block length $m$ and distance $m-h$ (for some small $h$), a random puncturing with block length $n$ is likely to be list-decodable with radius $1-O(h/n)$ and list size $O(n/h)$, provided that the alphabet is large enough and $n$ is not too big. In order to deduce Theorem \ref{thm-main-list-dec-rate}, we apply Theorem~\ref{thm-general} to the ``full'' $[q,k]$-Reed--Solomon code $\mathcal{C}\su \F_q^q$ where the evaluation points are all of the $q$ points in $\F_q$.

\begin{theorem}
\label{thm-general}
Fix a constant $c\geq 5$, and let $q\in \mathbb{N}$ be sufficiently large with respect to $c$. Suppose that $h,m\in \mathbb{N}$ are such that $h\leq q^{-1/c}\cdot m$, and let $\mathcal{C}\su \alphsymb^m$ be a code over an alphabet $\alphsymb$ of size $\vert \alphsymb\vert =q$ such that $\mathcal{C}$ has distance at least $m-h$. Then for any $n\in \mathbb{N}$ satisfying
\[3c\cdot h< n\leq \min\left(\sqrt{\log_2 q}\cdot \sqrt{c/8}\cdot h\, ,\,  e^{h/(4c^3)}\cdot (c/2)\cdot h\right),\]
a random puncturing of $\mathcal{C}$ of block length $n$ is $(1-(3ch/n), \lfloor n/(ch)\rfloor)$-list-decodable with probability at least $1-q^{-h/4}$. In particular, there exist $(1-(3ch/n), \lfloor n/(ch)\rfloor)$-list-decodable puncturings of $\mathcal{C}$ of block length $n$.
\end{theorem}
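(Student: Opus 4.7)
My approach is to combine a union bound over distinct $(L+1)$-tuples of codewords of $\mathcal{C}$ (where $L := \lfloor n/(ch) \rfloor$) with a concentration estimate showing that any fixed tuple is ``bad'' with very small probability over the random puncturing $S$. Because $\mathcal{C}$ has distance at least $m - h$, the Singleton bound yields $|\mathcal{C}| \leq q^{h+1}$, so the number of $(L+1)$-tuples of distinct codewords is at most $q^{(h+1)(L+1)}$, which under the hypotheses is comparable to $q^{n/c + h}$.

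\textbf{Step 2 (necessary condition via inclusion-exclusion).} Fix a tuple $\gamma_1, \ldots, \gamma_{L+1} \in \mathcal{C}$. For each pair $i < i'$ write $A_{i,i'} := \{j \in [m] : \gamma_i[j] = \gamma_{i'}[j]\}$; the distance assumption forces $|A_{i,i'}| \leq h$. Suppose some $\beta \in \Sigma^n$ has each $\gamma_i|_S$ agreeing with $\beta$ on at least $3ch$ positions, and set $U_i := \{j \in S : \gamma_i[j] = \beta[j]\}$, so $|U_i| \geq 3ch$ and $U_i \cap U_{i'} \subseteq S \cap A_{i,i'}$. Since $\bigcup_i U_i \subseteq S$ has size at most $n$, the first-order inclusion-exclusion bound yields
\[
\sum_{i<i'} |S \cap A_{i,i'}| \;\geq\; \sum_i |U_i| - \Bigl|\bigcup_i U_i\Bigr| \;\geq\; 3ch(L+1) - n \;\geq\; 2n.
\]

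\textbf{Step 3 (concentration).} Write $Y := \sum_{i<i'} |S \cap A_{i,i'}| = \sum_{j \in [m]} p_j\, \mathbf{1}[j \in S]$, where $p_j$ counts the number of pairs in the tuple agreeing at position $j$; note $\sum_j p_j \leq \binom{L+1}{2} h$ and $p_j \leq \binom{L+1}{2}$. I would bound $\Pr[Y \geq 2n]$ by exploiting the negative association of the indicators $\mathbf{1}[j \in S]$ under sampling without replacement, and then applying a Chernoff-type moment-generating-function estimate with optimally chosen tilt. The hypothesis $n \leq \sqrt{\log_2 q \cdot c/8}\cdot h$ ensures that the mean $\mathbb{E}[Y] \leq (n/m)\binom{L+1}{2}h$ is comfortably below $2n$, while the hypothesis $n \leq e^{h/(4c^3)}(c/2)h$ is what makes the optimized exponential decay rate fast enough to beat the counting factor $q^{(h+1)(L+1)}$. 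Combining Steps~1 and~3 then yields an overall failure probability at most $q^{-h/4}$ by the union bound.

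\textbf{Main obstacle.} The delicate point is Step~3: because the weights $p_j$ can be as large as $\binom{L+1}{2}$, a naive Hoeffding bound is far too weak, and the tilt parameter in the Chernoff argument must be chosen carefully in order for the probability estimate to drop below $q^{-(h+1)(L+1)-h/4}$. A useful sanity check is that in the intermediate regime $n \lesssim 2c^2 h$ the sum $Y$ is deterministically smaller than $2n$ (since $\sum_{i<i'}|A_{i,i'}| \leq \binom{L+1}{2}h < 2n$ there), so the argument is automatic; only the regime $n \gg c^2 h$ genuinely requires the concentration step, and it is precisely the two upper bounds on $n$ in the hypothesis that together make the tilted bound tight enough in that regime.
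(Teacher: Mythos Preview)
Your approach has a genuine gap in Step~3, and the gap is not just a matter of choosing the tilt parameter more carefully: the necessary condition $Y=\sum_{i<i'}|S\cap A_{i,i'}|\ge 2n$ from Step~2 is simply too weak to support a union bound over $q^{(h+1)(L+1)}$ tuples. The problem is that the pairwise count overcounts by a factor of up to $(L{+}1)/2$ when many codewords agree at the same position. Concretely, take $L{+}1$ distinct codewords that all coincide on a common set $A\subseteq[m]$ with $|A|=h$ and are pairwise distinct off $A$ (for Reed--Solomon codes such tuples exist in abundance). Such a tuple can \emph{never} witness failure of $(1-3ch/n,L)$-list-decodability: if each $U_i$ had size $\ge 3ch$, then since positions outside $A$ lie in at most one $U_i$ we would get $3ch(L{+}1)\le n+(L{+}1)|S\cap A|\le n+(L{+}1)h$, contradicting $L{+}1>n/(ch)$. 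Yet for this tuple $Y=\binom{L+1}{2}\,|S\cap A|$, so $Y\ge 2n$ already when $|S\cap A|\gtrsim 4c^2h^2/n$, which is far smaller than $h$ once $n\gg c^2h$. The tail probability $\Pr[|S\cap A|\ge s]$ with $s\approx 4c^2h^2/n$ is only of order $q^{-\Theta(h^2/n)}$, whereas beating the union bound requires $q^{-\Theta(h L)}=q^{-\Theta(n/c)}$; since $n>3ch$ these differ by a factor of $\Theta(n^2/(ch^2))$, which is unbounded in the allowed range of $n$. So the ``main obstacle'' you identify is in fact fatal for this decomposition.

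The paper circumvents this by union-bounding not over codeword tuples but over the \emph{agreement patterns} $I_1,\dots,I_{L+1}\subseteq[n]$ (only $2^{n(L+1)}$ choices, which is $q^{O(h)}$ under the hypothesis $n\le\sqrt{(c/8)\log_2 q}\,h$). For fixed $I_1,\dots,I_{L+1}$ with $\sum_j|I_j|-|\bigcup_j I_j|>2chL$, it then bounds the number of bad tuples $(a_1,\dots,a_n)$ by a counting argument: one finds a small set $Z$ of indices such that revealing $(a_i,\beta[a_i])$ for $i\in Z$ already pins down every $\gamma_j$ (via the distance assumption), after which each remaining ``doubly covered'' index has at most $h$ choices for $a_i$. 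This structure is exactly what your pairwise-agreement statistic $Y$ loses; if you want to salvage a codeword-first union bound, you would need a necessary condition that tracks $\sum_{j\in S}(\max_\sigma\#\{i:\gamma_i[j]=\sigma\}-1)_+$ rather than $\sum_{j\in S}\binom{\cdot}{2}$, and then control both its tail and the size of the codeword family simultaneously---which essentially reproduces the paper's argument.
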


We made no particular effort to optimize the constants in  the theorems above.

We remark that Rudra and Wootters~\cite{rudra-wootters} previously also proved results concerning list-decodability of random puncturings of codes with large distance. However, the details of their results and our Theorem \ref{thm-general} differ significantly. In particular, while our theorem requires a much larger distance of the code $\mathcal{C}$, in their results the block length $n$ of the puncturing needs to be larger. For this reason, with their results one cannot obtain Reed--Solomon codes with rates as large as in Theorem \ref{thm-main-list-dec-rate}.

Let us also briefly comment on some other works related to our main result, Theorem \ref{thm-main-list-dec-rate} (a more detailed review of the relevant literature can be found in \cite[Section 1.2]{guo-et-al}). Using their random puncturing results mentioned above, Rudra and Wootters \cite{rudra-wootters} proved a result similar to Theorem \ref{thm-main-list-dec-rate}, but only with rate $\Omega(\eps/(\log^5(1/\eps)\log q))$. This is weaker than our Theorem \ref{thm-main-list-dec-rate} and than the result of Guo, Li, Shangguan, Tamo, and Wootters \cite{guo-et-al}, and in particular due to the factor $\log q$ in the denominator the rate bound in \cite{rudra-wootters} always goes to zero as $n$ grows (since $q\geq n$). Shangguan and Tamo \cite{shangguan-tamo} proved a result of a similar spirit as Theorem~\ref{thm-main-list-dec-rate} for small list sizes $L=2$ and $L=3$ (which in particular means that the radius $r$ is bounded away from $1$), but with an optimal trade-off between radius, rate and list size (more precisely, for given rate and list size $L\in \{2,3\}$ their result gives the exact best-possible list-decoding radius). On a different note, while Theorem \ref{thm-main-list-dec-rate} shows that almost all choices of the evaluation points $(\alpha_1,\dots,\alpha_n)\in \F_q^n$ lead to $(1-\eps,\lfloor 10/\eps\rfloor)$-list-decodable Reed--Solomon codes, it is plausible that some choices of $(\alpha_1,\dots,\alpha_n)$ fail to have this property. There are some related results of Guruswami and Rudra \cite{guruswami-rudra} and of Ben-Sasson, Kopparty, and Radhakrishnan \cite{ben-sasson-et-al} pointing in this direction (\cite{guruswami-rudra} shows that for some choices of $(\alpha_1,\dots,\alpha_n)$ the code $\mathcal{C}^{(k)}_{\alpha_1,\dots,\alpha_n}$ fails to satisfy a stronger property called list-recoverability, and \cite{ben-sasson-et-al} shows a negative result concerning the list-decodability of Reed--Solomon codes in the case $q=n$ where up to permutation there is only one choice of the evaluation points $(\alpha_1,\dots,\alpha_n)$).

\textit{Notation.} Let $\mathbb{N}=\{1,2,3,\dots\}$, and for $n\in \mathbb{N}$ let $[n]=\{1,\dots,n\}$.

\section{Proofs}

Recall that Theorem~\ref{thm-main-list-dec-rate} implies Theorem~\ref{thm-main-simple}. We now show how Theorem~\ref{thm-main-list-dec-rate} follows from Theorem~\ref{thm-general}.

\begin{proof}[Proof of Theorem \ref{thm-main-list-dec-rate}]Let $c\geq 5$, let $\eps\in (0,1)$, and let $n\in \mathbb{N}$ be sufficiently large with respect to $\eps$ and $c$. Let $k=\lceil \eps n/(3c)\rceil$ and let $q$ be a prime power with $q\geq n^{c/(c-1)}$.

In order to apply Theorem \ref{thm-general}, let $m=q$ and $h=k-1\leq \eps n/(3c)$. Note that we have $h\leq n\leq q^{(c-1)/c}=q^{-1/c}\cdot m$ and
\[3c\cdot \lceil \eps n/(3c)\rceil < n\leq \min\left(\sqrt{\log_2 q}\cdot \sqrt{c/8}\cdot (\lceil \eps n/(3c)\rceil-1)\, ,\,  e^{(\lceil \eps n/(3c)\rceil-1)/(4c^3)}\cdot (c/2)\cdot (\lceil \eps n/(3c)\rceil-1)\right),\]
by the assumption that $n$ (and therefore also $q$) is sufficiently large with respect to $\eps$ and $c$.

Let us consider the alphabet $\alphsymb=\F_q$ and the ``full'' $[q,k]$-Reed--Solomon code $\mathcal{C}\su \F_q^q$ where the evaluation points are all of the $q$ points in $\F_q$. Note that $\mathcal{C}$ has distance $q-k+1= m-h$.

Hence all assumptions of Theorem \ref{thm-general} are satisfied, and we can conclude that a random puncturing of $\mathcal{C}$ of block length $n$ is $(1-(3ch/n), \lfloor n/(ch)\rfloor)$-list-decodable with probability at least $1-q^{-h/4}\geq 1-q^{-(\eps n/(12c))+(1/4)}\geq 1-q^{-\eps n/(13c)}$ (using that $n$ is sufficiently large with respect to $\eps$ and $c$). Noting that $1-(3ch/n)\geq 1-\eps$ and $n/(ch)\leq n/(\eps n/3-c)<(3/\eps)+1$ (again, as $n$ is sufficiently large with respect to $\eps$ and $c$), this implies that such a random puncturing of $\mathcal{C}$ is $(1-\eps, \lceil 3/\eps\rceil)$-list-decodable with probability at least $1-q^{-\eps n/(13c)}$. In other words, for a uniformly random choice of an $n$-tuple $(\alpha_1,\dots,\alpha_n)\in \F_q^n$ with distinct entries $\alpha_1,\dots,\alpha_n$, the Reed--Solomon code $\mathcal{C}^{(k)}_{\alpha_1,\dots,\alpha_n}$ is $(1-\eps,\lceil 3/\eps\rceil)$-list-decodable with probability at least $1-q^{-\eps n/(13c)}$.
\end{proof}

Our aim for the rest of this section is to prove Theorem \ref{thm-general}. We deduce Theorem \ref{thm-general} from the following theorem. This approach is motivated by \cite{guo-et-al} and \cite{shangguan-tamo}, even though the setting there is specific to Reed--Solomon codes.

\begin{theorem}\label{thm-weaker-list-decoding-4kL}
Fix a constant $c\geq 5$, suppose that $q,h,m\in \mathbb{N}$ are such that $h\leq q^{-1/c}\cdot m$, and let $\mathcal{C}\su \alphsymb^m$ be a code over an alphabet $\alphsymb$ of size $\vert \alphsymb\vert =q$ such that $\mathcal{C}$ has distance at least $m-h$.

Suppose $L$ is a non-negative integer satisfying $L<e^{h/(4c^3)}-2$. Let $n\in \mathbb{N}$ and consider subsets $I_1,\dots,I_{L+1}\su [n]$ such that
\begin{equation}\label{eq-weight-sets-I}
\sum_{j=1}^{L+1}\vert I_j\vert-\left\vert\bigcup_{j=1}^{L+1} I_j\right\vert> 2chL.
\end{equation}
Let us say that an $n$-tuple $(a_1,\dots,a_n)\in [m]^n$ with distinct entries $a_1,\dots,a_n$ is \emph{bad} if there exist a point $\beta\in \alphsymb^m$ and distinct codewords $\gamma_1,\dots,\gamma_{L+1}\in \mathcal{C}$ such that for all $j=1,\dots,L+1$ and all $i\in I_j$ we have $\gamma_j[a_i]=\beta[a_i]$. Then there are at most $q^{-h/2}\cdot m^n$ bad $n$-tuples $(a_1,\dots,a_n)\in [m]^n$.
\end{theorem}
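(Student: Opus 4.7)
The plan is to upper-bound the number of bad $n$-tuples by over-counting the pairs $(\vec{a}, \vec{\gamma})$, where $\vec{a} = (a_1, \dots, a_n) \in [m]^n$ has distinct entries and $\vec{\gamma} = (\gamma_1, \dots, \gamma_{L+1})$ is an $(L+1)$-tuple of distinct codewords in $\mathcal{C}$ satisfying the compatibility condition $\gamma_j[a_i] = \gamma_{j'}[a_i]$ for every $j, j'$ and every $i \in I_j \cap I_{j'}$. The vector $\beta$ in the definition is then forced on the positions $\{a_i : i \in \bigcup_j I_j\}$ and is irrelevant elsewhere, so it can be dropped from the counting. Since every bad tuple admits at least one such witness, it suffices to prove that the number of compatible pairs is at most $q^{-h/2}\,m^n$. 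For convenience, write $J(i) := \{j : i \in I_j\}$ and $U_{\geq 2} := \{i \in \bigcup_j I_j : |J(i)| \geq 2\}$.

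The main tool is a Singleton-style bound, valid because $\mathcal{C}$ has distance at least $m-h$: the restriction of $\mathcal{C}$ to any $h+1$ positions is injective, and hence for any partial assignment on $s$ positions, at most $q^{\max(0, h+1-s)}$ codewords in $\mathcal{C}$ extend it. Fixing an ordering of $I_1, \dots, I_{L+1}$ and processing the codewords in that order, at step $j$ the codeword $\gamma_j$ is constrained by the already-fixed partial assignment on $t_j := |I_j \cap (I_1 \cup \cdots \cup I_{j-1})|$ positions, giving at most $q^{\max(0, h+1-t_j)}$ choices (and note $\sum_j t_j = \Delta := \sum_j |I_j| - |\bigcup_j I_j|$ by a telescoping identity). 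Simultaneously, for each $i \in U_{\geq 2}$ the entry $a_i$ must lie in the common agreement set of $\{\gamma_j : j \in J(i)\}$, which has size at most $h$ by the distance hypothesis; using $h \leq q^{-1/c}\,m$, this contributes a ``position saving'' factor of $(h/m)^{|U_{\geq 2}|} \leq q^{-|U_{\geq 2}|/c}$. The hypothesis $\Delta > 2chL$ then gives $|U_{\geq 2}| \geq \Delta/L > 2ch$, so this saving is at least $q^{-2h}$.

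Combining the two bounds yields an overall estimate of the form $m^n \cdot q^{\sum_j \max(0, h+1-t_j) - |U_{\geq 2}|/c}$, and the aim is to show that the exponent is at most $-h/2$. The main difficulty is that $\sum_j \max(0, h+1-t_j)$ can a priori be as large as $(L+1)(h+1)$, which would swamp the $\approx 2h$ saving from the position term. I expect to handle this by picking a good (greedy or randomly averaged) ordering that maximizes $\sum_j \min(t_j, h+1)$, together with the key observation that whenever $t_j$ substantially exceeds $h+1$ the candidate $\gamma_j$ is almost uniquely determined by the partial assignment, and is often forced to coincide with some earlier $\gamma_{j'}$ (which contradicts distinctness and contributes zero). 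The hypothesis $L < e^{h/(4c^3)} - 2$ should control the quantitative terms in the resulting case analysis. The precise form of this ordering-plus-distinctness argument, balanced against the assumption $\Delta > 2chL$, is what I anticipate as the main technical obstacle.
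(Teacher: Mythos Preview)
Your plan correctly isolates the two sources of saving---the Singleton-type codeword bound and the position constraint coming from the small agreement set---but the way you combine them is circular. To apply the bound $q^{\max(0,h+1-t_j)}$ for $\gamma_j$, the positions $a_i$ with $i\in I_j\cap(I_1\cup\cdots\cup I_{j-1})$ must already be fixed (otherwise you do not know at which coordinates $\gamma_j$ is being constrained). Conversely, to say that $a_i$ for $i\in U_{\geq 2}$ has at most $h$ choices, the relevant codewords must already be fixed. So once you have paid $m^n$ for the positions in order to constrain the codewords, you cannot retroactively replace $m$ by $h$ on $U_{\geq 2}$; the product $m^n\cdot q^{\sum_j\max(0,h+1-t_j)}\cdot (h/m)^{|U_{\geq 2}|}$ is not a valid upper bound on the number of compatible pairs. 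Your ``main difficulty'' paragraph senses that the numerics are off, but the problem is structural, and no reordering of the $I_j$'s resolves it (nor does the distinctness observation, which only shows that certain branches of the count are empty rather than improving the bound on nonempty ones).

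The paper breaks this circularity by interposing a carefully chosen set of positions. After an inductive reduction on $L$ to the case where $|I_t\cap M|\geq 2ch$ for every $t$ (with $M=U_{\geq 2}$), it shows there is a \emph{small} set $Z\subseteq M$ with $|Z|\leq |M|/(2c-2)$ and $|Z\cap I_t|>h$ for every $t$; this $Z$ is built by random sampling and Chernoff, and the union bound over the $L+1$ sets $I_t$ is precisely where the hypothesis $L<e^{h/(4c^3)}-2$ is used. The count is then: fix $a_i$ for $i\in Z$ ($m^{|Z|}$ choices) and the \emph{values} $\beta[a_i]$ for $i\in Z$ ($q^{|Z|}$ choices); because $|Z\cap I_t|>h$, this already pins down every $\gamma_t$ uniquely; now, with the codewords known, each $a_i$ with $i\in M\setminus Z$ has at most $h$ choices; the remaining $a_i$ are free. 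The resulting bound $m^{|Z|}q^{|Z|}h^{|M|-|Z|}m^{n-|M|}$ is what closes, via $|Z|\leq|M|/(2c-2)$, $|M|\geq 2ch$, and $h/m\leq q^{-1/c}$. In particular, dispensing with $\beta$ is not harmless here: paying $q^{|Z|}$ for the $\beta$-values on a small pinning set is exactly the device that lets both savings be used without circularity.
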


Guo, Li, Shangguan, Tamo, and Wootters \cite[Theorem 6.3]{guo-et-al} proved a statement similar to Theorem \ref{thm-weaker-list-decoding-4kL} in the specific setting of Reed--Solomon codes. However, their statement gives a weaker bound for the number of bad $n$-tuples and requires a stronger version of the assumption (\ref{eq-weight-sets-I}), where the term on the right-hand side of (\ref{eq-weight-sets-I}) is larger by a factor of $\Theta(\log L)$. This additional logarithmic factor leads to the logarithmic loss in the rate $\Omega(\eps/\log (1/\eps))$ of the  Reed--Solomon codes in their result (and the weaker bound for the number of bad $n$-tuples leads to them requiring the field size $q$ to be exponential in the block length $n$).

Let us now show the deduction of Theorem \ref{thm-general} from Theorem \ref{thm-weaker-list-decoding-4kL}. This deduction is fairly standard (similar arguments appear in \cite{guo-et-al,shangguan-tamo}). Afterwards, at the end of this section, we will present the proof of Theorem \ref{thm-weaker-list-decoding-4kL}.

\begin{proof}[Proof of Theorem \ref{thm-general}]
Let us define $L=\lfloor n/(ch)\rfloor\geq 3$, and note that by the assumptions on $n$ we have
\[L+2\leq 2L\leq \frac{2n}{ch}\leq e^{h/(4c^3)}.\]
Also note that
\begin{equation}\label{eq-ineq-paramaters}
\frac{n+2chL}{L+1}< \frac{n}{L+1}+2ch\leq \frac{n}{n/(ch)}+2ch=3ch.
\end{equation}
Let us also remark that the assumptions in Theorem \ref{thm-general} (including the assumption that $q$ is sufficiently large with respect to $c$) imply that
\begin{equation}\label{eq-ineq-paramaters-2}
\frac{n}{m}\leq \frac{\sqrt{\log_2 q}\cdot \sqrt{c/8}\cdot h}{q^{1/c}\cdot h}=\sqrt{c/8}\cdot \frac{\sqrt{\log_2 q}}{q^{1/c}}< \frac{1}{2}
\end{equation}
and
\begin{equation}\label{eq-ineq-paramaters-3}
\frac{2n^2}{m}\leq \frac{2\cdot \log_2 q\cdot (c/8)\cdot h^2}{q^{1/c}\cdot h}=\frac{1}{4}\cdot c\cdot h\cdot \log_2 q\cdot q^{-1/c}<\frac{1}{12}\cdot h\cdot \log_2 q.
\end{equation}

We need to show that a (uniformly) random puncturing of $\mathcal{C}$ of block length $n$ is $(1-(3ch/n), L)$-list-decodable with probability at least $1-q^{-h/4}$. We can model the choice of such a random puncturing by taking a uniformly random $n$-tuple $(a_1,\dots,a_n)\in [m]^n$ with distinct entries $a_1,\dots,a_n$ and considering the puncturing $\mathcal{C}_S$ for $S=\{a_1,\dots,a_n\}$. Note that $\mathcal{C}_S$ fails to be $(1-(3ch/n), L)$-list-decodable if and only if there exist a point $\beta\in \alphsymb^m$ and distinct codewords $\gamma_1,\dots,\gamma_{L+1}\in \mathcal{C}$ such that for each $j=1,\dots,L+1$ we have $\gamma_j[s]=\beta[s]$ for at least $3ch$ elements $s\in S$. Recalling that $S=\{a_1,\dots,a_n\}$, this condition is equivalent to having $\gamma_j[a_i]=\beta[a_i]$ for at least $3ch$ indices $i\in [n]$.

Hence, if for our random choice of $(a_1,\dots,a_n)\in [m]^n$ the puncturing $\mathcal{C}_S$ fails to be $(1-(3ch/n), L)$-list-decodable, then for each $j=1,\dots,L+1$ we can find a set $I_j\su [n]$ of size $\vert I_j\vert\geq 3ch$ such that we have $\gamma_j[a_i]=\beta[a_i]$ for all $i\in I_j$. With the notation in Theorem \ref{thm-weaker-list-decoding-4kL}, this means that the $n$-tuple $(a_1,\dots,a_n)\in [m]^n$ is bad with respect to the subsets $I_1,\dots,I_{L+1}$.

Note that there are at most $(2^{n})^{L+1}$ possibilities to choose subsets $I_1,\dots,I_{L+1}\su [n]$ with $\vert I_j\vert\geq 3ch$ for $j=1,\dots,L+1$. For any such choice of subsets, by (\ref{eq-ineq-paramaters}) we have
\[\sum_{j=1}^{L+1}\vert I_j\vert-\left\vert\bigcup_{j=1}^{L+1} I_j\right\vert\geq (L+1)\cdot 3ch-n> (L+1)\cdot \frac{n+2chL}{L+1}-n=2chL.\]
Hence, by Theorem \ref{thm-weaker-list-decoding-4kL}, for any fixed choice of $I_1,\dots,I_{L+1}$, there are at most $q^{-h/2}\cdot m^n$ different $n$-tuples $(a_1,\dots,a_n)\in [m]^n$ which are bad with respect to $I_1,\dots,I_{L+1}$. Overall, this means that there are at most $2^{n(L+1)}\cdot q^{-h/2}\cdot m^n$ different $n$-tuples $(a_1,\dots,a_n)\in [m]^n$ which are bad with respect to some choice of subsets $I_1,\dots,I_{L+1}\su [n]$ (with $\vert I_j\vert\geq 3ch$ for $j=1,\dots,L+1$). Thus, the number of $n$-tuples $(a_1,\dots,a_n)\in [m]^n$ with distinct entries $a_1,\dots,a_n$, such that the puncturing $\mathcal{C}_S$ for $S=\{a_1,\dots,a_n\}$ is not $(1-(3ch/n), L)$-list-decodable, is at most
\[2^{n(L+1)}\cdot q^{-h/2}\cdot m^n\leq 2^{(4/3)Ln}\cdot q^{-h/2}\cdot m^n \leq 2^{(4/3)n^2/(ch)}\cdot q^{-h/2}\cdot m^n\leq q^{h/6}\cdot q^{-h/2}\cdot m^n=q^{-h/3}\cdot m^n,\]
where for the third inequality we used the assumption that $n\leq \sqrt{\log_2 q}\cdot \sqrt{c/8}\cdot h$.

Finally, note that the total number of $n$-tuples $(a_1,\dots,a_n)\in [m]^n$ with distinct entries $a_1,\dots,a_n$ is
\[m(m-1)\dotsm(m-n+1)\geq \left(1-\frac{n}{m}\right)^n\cdot m^n\geq 2^{-2n^2/m}\cdot m^n\geq q^{-h/12}\cdot m^n.\]
Here, we used that $1-x\geq 2^{-2x}$ for all $x\in (0,1/2)$, as well as (\ref{eq-ineq-paramaters-2}) and (\ref{eq-ineq-paramaters-3}).

All in all, this means that for a random choice of an $n$-tuple $(a_1,\dots,a_n)\in [m]^n$ with distinct entries $a_1,\dots,a_n$, the probability that the puncturing $\mathcal{C}_S$ for $S=\{a_1,\dots,a_n\}$ fails to be $(1-(3ch/n), L)$-list-decodable is at most
\[\frac{q^{-h/3}\cdot m^n}{q^{-h/12}\cdot m^n}=q^{-h/4}.\]
Hence a random puncturing of $\mathcal{C}$ of block length $n$ is $(1-(3ch/n), L)$-list-decodable with probability at least $1-q^{-h/4}$, as desired.
\end{proof}

It remains to prove Theorem \ref{thm-weaker-list-decoding-4kL}. This is the part of this paper requiring new ideas. Roughly speaking, the proof strategy is as follows. Recall that an $n$-tuple $(a_1,\dots,a_n)\in [m]^n$ is called bad if there are distinct codewords $\gamma_1,\dots,\gamma_{L+1}\in \mathcal C$ and a point $\beta\in \alphsymb^m$ such that $\gamma_j[a_i]=\beta[a_i]$ whenever $i\in I_j$. Our goal is to prove an upper bound on the number of bad $n$-tuples $(a_1,\dots,a_n)$. The key idea of the proof is to find a relatively small set of indices $Z\su [n]$, such that specifying $a_i$ and $\beta[a_i]$ for all $i\in Z$ already uniquely determines all of the codewords $\gamma_1,\dots,\gamma_{L+1}$ (via the condition that $\gamma_j[a_i]=\beta[a_i]$ whenever $i\in I_j$, and the assumption that $\mathcal{C}$ has large distance). Once the codewords $\gamma_1,\dots,\gamma_{L+1}$ are determined, for any distinct $j,j'\in \{1,\dots,L+1\}$ and any $i\in I_j\cap I_{j'}$, there are only a small number of choices for $a_i\in [m]$. Indeed, we must have $\gamma_j[a_i]=\beta[a_i]=\gamma_{j'}[a_i]$, so $a_i$ must be one of the few positions in which the codewords $\gamma_j$ and $\gamma_{j'}$ agree. Overall, we obtain the desired upper bound for the number of bad $n$-tuples $(a_1,\dots,a_n)$ by a counting argument that takes all of these restricted choices into account.

\begin{proof}[Proof of Theorem \ref{thm-weaker-list-decoding-4kL}]
We prove the theorem by induction on $L$. First, note that the statement is vacuously true for $L=0$, because it is impossible for the condition $|I_1|-|I_1|>2ch\cdot 0$ in (\ref{eq-weight-sets-I}) to be satisfied.

Let us now assume that $L\geq 1$, and that we have already proved the theorem for $L-1$. First, we consider the case that for some index $t\in \{1,\dots,L+1\}$ we have
\[\left\vert I_t\,\cap\bigcup_{j\in \{1,\dots,L+1\}\sm   \{t\}}I_j\right\vert<2ch.\]
Let us assume without loss of generality that $t=L+1$, then we have
\[\left\vert I_{L+1}\,\cap\bigcup_{j=1}^{L}I_j\right\vert<2ch.\]
But now (\ref{eq-weight-sets-I}) implies that
\[\sum_{j=1}^{L}\vert I_j\vert-\left\vert\bigcup_{j=1}^{L} I_j\right\vert=\sum_{j=1}^{L+1}\vert I_j\vert-\left\vert\bigcup_{j=1}^{L+1} I_j\right\vert-\left\vert I_{L+1}\,\cap\bigcup_{j=1}^{L}I_j\right\vert> 2chL-2ch=2ch(L-1).\]
This means that we can apply the induction hypothesis to $L-1$ and the sets $I_1,\dots,I_L$. This shows that the number of bad $n$-tuples $(a_1,\dots,a_n)\in [m]^n$ is at most $q^{-h/2}\cdot m^n$, since every $n$-tuple which is bad for the sets $I_1,\dots,I_{L+1}$ must also be bad for the sets $I_1,\dots,I_{L}$.

So we may from now on assume that for all $t=1,\dots, L+1$ we have
\[\left\vert I_t\,\cap\bigcup_{j\in \{1,\dots,L+1\}\sm   \{t\}}I_j\right\vert\geq 2ch.\]

Now let $M\su [n]$ be the set of those elements $i\in [n]$ that are contained in at least two of the sets $I_1,\dots,I_{L+1}$. Note that for each $t=1,\dots,L+1$, we have
\[\vert M\cap I_t\vert=\left\vert I_t\,\cap\bigcup_{j\in \{1,\dots,L+1\}\sm   \{t\}}I_j\right\vert\geq 2ch.\]
In particular, we have $\vert M\vert\geq 2ch$.

\begin{claim}\label{claim-set-Z}
There exists a set $Z\su M$ of size $\vert Z\vert\leq \vert M\vert/(2c-2)$ such that $\vert Z\cap I_t\vert> h$ for all $t=1,\dots,L+1$.
\end{claim}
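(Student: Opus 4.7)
My plan is a probabilistic argument with a fixed-size random set. Set $z = \lfloor |M|/(2c-2) \rfloor$ and let $Z$ be a uniformly random $z$-element subset of $M$. Then $|Z| \leq |M|/(2c-2)$ holds deterministically, so it suffices to show that with positive probability $|Z \cap I_t| > h$ for every $t \in \{1,\dots,L+1\}$.

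For a fixed $t$, the variable $|Z \cap I_t|$ is hypergeometrically distributed with expectation $\mu_t = z \cdot |M \cap I_t|/|M|$. Since $|M \cap I_t| \geq 2ch$, $|M| \geq 2ch$, and $z \geq |M|/(2c-2) - 1$, we obtain
\[\mu_t \geq \frac{2ch}{2c-2} - \frac{2ch}{|M|} \geq \frac{ch}{c-1} - 1.\]
The induction step assumes $L \geq 1$, which combined with the hypothesis $L < e^{h/(4c^3)} - 2$ forces $h \geq 4c^3 \log 3$; in particular $h \geq 2(c-1)$, so $\mu_t \geq h\bigl(1 + 1/(2(c-1))\bigr)$. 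Writing $h = (1-\delta)\mu_t$ gives $\delta \geq 1/(2c-1)$, and the Chernoff-type lower-tail bound for the hypergeometric distribution (Hoeffding, 1963) yields
\[\Pr[|Z \cap I_t| \leq h] \leq \exp(-\delta^2 \mu_t / 2) \leq \exp\bigl(-h/(2(2c-1)^2)\bigr).\]

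A union bound over $t \in \{1,\dots,L+1\}$ together with the hypothesis $L+2 < e^{h/(4c^3)}$ then gives
\[\Pr[\exists\, t:\ |Z \cap I_t| \leq h] \leq (L+1)\exp\bigl(-h/(2(2c-1)^2)\bigr) < \exp\bigl(h/(4c^3) - h/(2(2c-1)^2)\bigr).\]
For $c \geq 5$ one has $2(2c-1)^2 < 4c^3$ (equivalently $2c^3 > (2c-1)^2$), so the exponent is negative and this probability is strictly less than $1$. Hence a $Z$ with the required properties exists.

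The only delicate point is the balance of constants: the per-$t$ Chernoff exponent is of order $h/c^2$, while the union-bound loss is only $\log(L+1) \leq h/(4c^3)$, and the gap $4c^3 \gg c^2$ for $c \geq 5$ leaves plenty of room. Choosing $|Z|$ deterministically (rather than sampling each element of $M$ independently with some probability $p$) removes the need to also control the size of $Z$ and streamlines the constant chase.
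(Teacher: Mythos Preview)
Your proof is correct and follows essentially the same probabilistic approach as the paper: choose $Z$ randomly, apply a Chernoff-type lower-tail bound to each $|Z\cap I_t|$, and finish with a union bound using $L+2<e^{h/(4c^3)}$. The only difference is that the paper samples each element of $M$ independently with probability $1/(2c-1)$ (and hence must also bound $\Pr[|Z|>|M|/(2c-2)]$, giving $L+2$ bad events), whereas you fix $|Z|=\lfloor|M|/(2c-2)\rfloor$ and invoke the hypergeometric tail bound---a harmless variant that, as you note, spares one event in the union bound.
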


\begin{proof}Let us choose the set $Z\su M$ randomly by including each element of $M$ into the set $Z$ independently with probability $1/(2c-1)$. By the Chernoff bound (see for example \cite[Theorem A.1.4]{alon-spencer}), we have that
\[\Pr\left(\vert Z\vert> \frac{\vert M\vert}{2c-2}\right)< \exp\left(-\frac{2}{\vert M\vert}\cdot \left(\frac{\vert M\vert}{(2c-2)(2c-1)}\right)^2\right)\leq e^{-\vert M\vert/(8c^4)}\leq e^{-h/(4c^3)}.\]
Furthermore, for each $t=1,\dots,L+1$, each element of $M\cap I_t$ is an element of the set $Z$ independently with probability $1/(2c-1)$. Hence, again by the Chernoff bound, we have (recalling that $\vert M\cap I_t\vert\geq 2ch$)
\begin{multline*}
    \Pr(\vert Z\cap I_t\vert\leq h)\leq \exp\left(-\frac{2}{\vert M\cap I_t\vert}\cdot \left(\frac{\vert M\cap I_t\vert}{2c-1}-h\right)^2\right)=\exp\left(-2\vert M\cap I_t\vert\cdot \left(\frac{1}{2c-1}-\frac{h}{\vert M\cap I_t\vert}\right)^2\right)\\
    \leq \exp\left(-2\cdot 2ch\cdot \left(\frac{1}{2c-1}-\frac{1}{2c}\right)^2\right)=\exp\left(-\frac{4ch}{(2c)^2(2c-1)^2}\right)\leq e^{-h/(4c^3)}.
\end{multline*}
All in all, by a union bound, the probability of having $\vert Z\vert\leq \vert M\vert/(2c-2)$ and  $\vert Z\cap I_t\vert> h$ for all $t=1,\dots,L+1$ is at least
\[1-e^{-h/(4c^3)}-(L+1)\cdot e^{-h/(4c^3)}=1-(L+2)\cdot e^{-h/(4c^3)}>0\]
(recalling our assumption that $L<e^{h/(4c^3)}-2$). This means that the desired set $Z\su M$ exists.
\end{proof}

Let us now fix a set $Z\su M$ as in Claim \ref{claim-set-Z}. Now we can show the desired upper bound on the number of bad $n$-tuples $(a_1,\dots,a_n)\in [m]^n$ in the following way. Recall that for a bad $n$-tuple $(a_1,\dots,a_n)\in [m]^n$ there exist a point $\beta\in \alphsymb^m$ and distinct codewords $\gamma_1,\dots,\gamma_{L+1}\in \mathcal{C}$ such that for all $j=1,\dots,L+1$ and all $i\in I_j$ we have $\gamma_j[a_i]=\beta[a_i]$.

Note that we have at most $m^{\vert Z\vert}$ choices for the elements $a_i$ for all $i\in Z$ (recall that the elements $a_i$ need to all be distinct). Furthermore, there are $\vert \alphsymb\vert^{\vert Z\vert}=q^{\vert Z\vert}$ possibilities for the values $\beta[a_i]$ for all $i\in Z$. Now, knowing $a_i$ and $\beta[a_i]$ for all $i\in Z$ already determines the codewords $\gamma_1,\dots,\gamma_{L+1}$. Indeed, for each $t=1,\dots,L+1$ we have $\vert Z\cap I_t\vert> h$ and $\gamma_t[a_i]=\beta[a_i]$ for all $i\in Z\cap I_t$ (and the coordinates $a_i\in [m]$ for $i\in Z\cap I_t$ are distinct). Since any two codewords in $\mathcal{C}$ agree in at most $h$ positions (as $\mathcal{C}$ has distance at least $m-h$), for each $t=1,\dots,L+1$ there is at most one possible codeword $\gamma_t$ satisfying $\gamma_t[a_i]=\beta[a_i]$ for all $i\in Z\cap I_t$. Thus, after choosing $a_i$ and $\beta[a_i]$ for all $i\in Z$, there is at most one possibility for the codewords $\gamma_1,\dots,\gamma_{L+1}$.

Furthermore, knowing the codewords $\gamma_1,\dots,\gamma_{L+1}$ there are at most $h$ possibilities for each $a_i\in [m]$ with $i\in M\sm Z$. Indeed, for each $i\in M\sm Z$ there exist two distinct indices $j,j'\in \{1,\dots,L+1\}$ with $i\in I_j\cap I_{j'}$ and we must have $\gamma_j[a_i]=\beta[a_i]=\gamma_{j'}(a_i)$. Hence the codewords $\gamma_j$ and $\gamma_{j'}$ must agree in position $a_i$. However, as $\mathcal{C}$ has distance at least $m-h$, the codewords $\gamma_j$ and $\gamma_{j'}$ agree in at most $h$ positions, and so there are at most $h$ possible choices for $a_i$. Thus, for each $i\in M\sm Z$, there are indeed at most $h$ choices for $a_i$ and altogether this gives at most $h^{\vert M\vert-\vert Z\vert}$ choices for determining all the the elements $a_i\in [m]$ with $i\in M\sm Z$.

Finally, there are at most $m^{n-\vert M\vert}$ choices for the elements $a_i\in [m]$ with $i\in [n]\sm M$. All in all, this means that the number of possible choices for a bad $n$-tuple $(a_1,\dots,a_n)\in [m]^n$ is at most
\begin{multline*}
m^{\vert Z\vert}\cdot q^{\vert Z\vert}\cdot h^{\vert M\vert-\vert Z\vert}\cdot m^{n-\vert M\vert}=\left(\frac{h}{m}\right)^{\vert M\vert} \left(\frac{qm}{h}\right)^{\vert Z\vert} m^n\\
\leq \left(\frac{h}{m}\right)^{\vert M\vert} \left(\frac{qm}{h}\right)^{\vert M\vert/(2c-2)} m^n=\left(\frac{h}{m}\cdot q^{1/(2c-3)}\right)^{\frac{2c-3}{2c-2}\cdot \vert M\vert} m^n\\
\leq \left(q^{-1/c}\cdot q^{1/(2c-3)}\right)^{\frac{2c-3}{2c-2}\cdot \vert M\vert} m^n=q^{-\frac{c-3}{2c-2}\cdot \frac{1}{c}\cdot \vert M\vert}m^n\leq q^{-\vert M\vert/(4c)}m^n\leq q^{-h/2}m^n.
\end{multline*}
Here, we used the assumptions $h\leq q^{-1/c}\cdot m$ and $c\geq 5$ as well as $\vert Z\vert\leq \vert M\vert/(2c-2)$ and $\vert M\vert\geq 2ch$.
\end{proof}

\section{Concluding remarks}
We have proved that there exist Reed--Solomon codes which are list-decodable with radius $1-\varepsilon$ (and polynomial list size) and have rate $\Omega(\varepsilon)$. Moreover, such codes exist with block length $n$ and field size $q$ whenever $n$ is sufficiently large and $q\ge n^{1+\delta}$, for any constant $\delta>0$. There are several interesting further directions of research.

First, our result uses the probabilistic method and is fundamentally non-constructive. It would be very interesting if, in the setting of Theorem~\ref{thm-main-list-dec-rate}, one could achieve the same bound with \emph{explicit} choices of evaluation points $\alpha_1,\dots,\alpha_n\in \F_q$. In fact, it would be interesting if one could beat the Johnson bound at all with an explicit Reed--Solomon code (we remark that there are constructions in \cite{guo-et-al,shangguan-tamo} which are in a certain sense explicit, but they require an exponential field size and therefore do not lead to efficient algorithms).

Second, it would be interesting to further improve the bounds in Theorem~\ref{thm-main-simple}. While our field size requirement $q\ge n^{1+\delta}$ is much weaker than the requirement in \cite{guo-et-al}, it would still be interesting to sharpen this further: does it suffice to assume that $q\ge Cn$ for some constant $C$? Also, it would be nice to optimize the constant factors in the trade-off between the rate and the list-decoding radius. In particular, it seems likely that there should exist Reed--Solomon codes which are list-decodable with radius $1-\varepsilon$ (and polynomial list size) and have rate $(1-o(1))\varepsilon$. An exact conjecture for the best-possible relationship between rate, list-decoding radius and list size was made by Shangguan and Tamo~\cite{shangguan-tamo}.

\textit{Acknowledgements.} We would like to thank Shachar Lovett for introducing us to list-decodability of Reed--Solomon codes, and Avi Wigderson for many very helpful suggestions.

\end{document}